\documentclass[12pt,oneside,english]{amsart}
\usepackage[latin9]{inputenc}
\usepackage{geometry}
\geometry{verbose,tmargin=2cm,bmargin=2cm,lmargin=3.5cm,rmargin=3.5cm}
\usepackage{textcomp}
\usepackage{amsbsy}
\usepackage{amstext}
\usepackage{amsthm}
\usepackage{amssymb}
\usepackage{stackrel}
\usepackage{graphicx}
\usepackage{setspace}
\doublespacing

\makeatletter

\DeclareFontEncoding{LGR}{}{}
\DeclareRobustCommand{\greektext}{%
  \fontencoding{LGR}\selectfont\def\encodingdefault{LGR}}
\DeclareRobustCommand{\textgreek}[1]{\leavevmode{\greektext #1}}
\ProvideTextCommand{\~}{LGR}[1]{\char126#1}

\numberwithin{equation}{section}
\theoremstyle{plain}
\newtheorem{thm}{\protect\theoremname}
\theoremstyle{definition}
\newtheorem{example}[thm]{\protect\examplename}
\ifx\proof\undefined
\newenvironment{proof}[1][\protect\proofname]{\par
	\normalfont\topsep6\p@\@plus6\p@\relax
	\trivlist
	\itemindent\parindent
	\item[\hskip\labelsep\scshape #1]\ignorespaces
}{%
	\endtrivlist\@endpefalse
}
\providecommand{\proofname}{Proof}
\fi
\theoremstyle{definition}
\newtheorem{defn}[thm]{\protect\definitionname}
\theoremstyle{plain}
\newtheorem{cor}[thm]{\protect\corollaryname}
\theoremstyle{plain}
\newtheorem{lem}[thm]{\protect\lemmaname}
\theoremstyle{remark}
\newtheorem{rem}[thm]{\protect\remarkname}

\usepackage{lscape}
\usepackage{lineno}

\makeatother

\usepackage{babel}
\providecommand{\corollaryname}{Corollary}
\providecommand{\definitionname}{Definition}
\providecommand{\examplename}{Example}
\providecommand{\lemmaname}{Lemma}
\providecommand{\remarkname}{Remark}
\providecommand{\theoremname}{Theorem}

\begin{document}
\vspace{0.5cm}

\title{Short title: Oscillatory Stationary Population Identity}

\maketitle
\textbf{\Large{}Full Title:}{\Large{} On the three properties of stationary
populations and knotting with non-stationary populations}{\Large\par}

(To appear in \emph{Bulletin of Mathematical Biology, }Springer)

\begin{center}

\textbf{Arni S.R. Srinivasa Rao}\footnote{Laboratory for Theory and Mathematical Modeling, Department of Medicine
- Division of Infectious Diseases, Division of Epidemiology, Medical
College of Georgia, Department of Mathematics, Augusta University,
1120, 15th Street, AE 1015 Augusta, GA, 30912, USA, Tel: +1-706-721-3786
(office). Email: arrao@augusta.edu (corresponding author). }\textbf{ }and\textbf{ James R. Carey}\footnote{Department of Entomology, University of California, Davis 95616, USA,
and Center on the Economics and Demography of Aging, University of
California, Berkeley. Email: jrcarey@ucdavis.edu}

\end{center}

\vspace{1.0cm}
\begin{abstract}
A population is considered stationary if the growth rate is zero and
the age structure is constant. It thus follows that a population is
considered non-stationary if either its growth rate is non-zero and/or
its age structure is non-constant. We propose three properties that
are related to the stationary population identity (SPI) of population
biology by connecting it with stationary populations and non-stationary
populations which are approaching stationarity. One of these important
properties is that SPI can be applied to partition a population into
stationary and non-stationary components. These properties provide
deeper insights into cohort formation in real-world populations and
the length of the duration for which stationary and non-stationary
conditions hold. The new concepts are based on the time gap between
the occurrence of stationary and non-stationary populations within
the SPI framework that we refer to as\emph{ Oscillatory }SPI\emph{
}and the\emph{ Amplitude of }SPI\emph{.}

$ $
\end{abstract}

\keywords{Key words: stationary population identity, Oscillatory properties,
functional knots, PDEs}

\subjclass[2000]{AMS Subject class: 92D25, 60H35}

\section{\textbf{Stationary population identity: History and inspirations
from biological experiments}}

Stationary Population Identity (SPI) is about equality of two quantities:
one is obtained from the age-distribution of a stationary population
and the other is obtained from the remaining years to live (or remaining
time to live) of these individuals. This equality which is closely
associated with the concept of the life table (a mathematical model
to represent age-specific mortality in a population) can be expressed
in several other ways. Let $X$ be the set of elements representing
the proportions of populations at each age of a stationary population
at time $t$ and let $Y$ be the set of elements representing the
remaining number of years (or remaining time units) left to live at
each age, then SPI holds imply, 

\begin{equation}
X=Y.\label{eq:X=00003DY}
\end{equation}

In a strict sense $X$ consists of distinct elements and $Y$ consists
of distinct elements. Let us take an element in $X$, say $p$. Then,
there exists an age in the stationary population at which the proportion
of the population to the total population is $p$. If the equation
(\ref{eq:X=00003DY}) is true, then that guarantees that one of the
elements of $Y$ is also $p$. 

The equation (\ref{eq:X=00003DY}) is true in population life tables
which are stationary in nature. 

Introduced to the demography literature by Brouard (\cite{Bouard,Brouard's Book})
using French life tables and to the population biology literature
by Muller, Carey and their colleagues (\cite{MullerCarey2004,Carey2012,CareySilverRao})
using survival patterns of captive cohorts of insects, stationary
population identity (SPI) is expressed as $f_{1}(a)=f_{2}(a)$, where
$f_{1}(a)$ the fraction of individuals who are captured at age $\text{\textquotedblleft\ensuremath{a}\textquotedblright}$
(out of total population) is equal to $f_{2}(a)$ the proportion of
individuals who have a remaining time units left to die (see Figure
\ref{Stationary-population-color}). Although SPI is observed in populations
that are stationary (replacement-level growth), the vast majority
of populations for both humans and non-human species are both non-stationary
and non-stable (changing growth rate and/or age structure). All the
relevant definitions used in this paper are provided in Table \ref{tab:Key-Definitions-Used}.
\begin{example}
Consider individuals between the ages of 70 to 90 years as a stationary
subpopulation of the larger stationary population depicted in Figure
\ref{Stationary-population-color}. According to the Stationary Population
Identity (SPI), if 14.8\% of the population are in this 70 to 90 year
old subpopulation, then there also exists another sub-population of
the same number (and percentage) of individuals who have between 70
and 90 years remaining (diagonal shaded area from 0 to 30 years in
Figure \ref{Stationary-population-color}). 
\end{example}
In this article, we prove several new theoretical aspects of stationary
and non-stationary populations while understanding the implications
of SPI. Three prominent of them are listed below:

(i) Populations consist of both stationary and non-stationary components
(Theorem 2),

(ii) Stationary subpopulations of a total populations also possess
stationary components (Theorem 3), 

(iii) Population that is transiently stationary over a finite or an
infinite interval can be joined with non-stationary populations (oscillatory
property) (Theorems 6 and 7).

\begin{figure}
\includegraphics[scale=0.5]{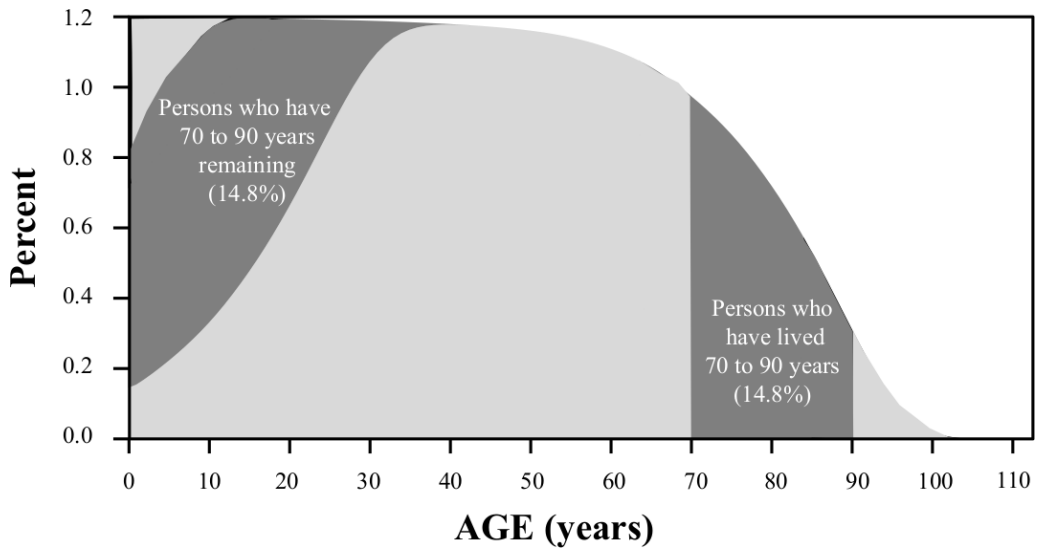}

\caption{\label{Stationary-population-color}Illustration of the stationary
population identity in which LL equals LR. Graphic is based on the
U.S. 2006 female life table in which 14.8\% of the life table population
falls between 70 and 90 years of age (i.e., life-lived). This percentage
is identical to the percentage of individuals in this same hypothetical
stationary population who have from 70 to 90 years remaining (i.e.,
life-left).}
\end{figure}

Discovery of the SPI by Carey, originally referred to as Carey's Equality
(\cite{Vaupel}, \cite{Carey'schapter2019}) but now referred to as
SPI after the revelation that Brouard's earlier papers also documented
this identity (see \cite{CareySilverRao}), was an outcome of a 10-
year, U.S. National Institute on Aging-funded research program directed
by biodemographer James R. Carey designed to study aging in the wild.
This led to the identification of the relationship between population
age structure and post-capture life spans of individuals through the
use of a simple four-age class life table (Table 1 in \cite{MullerCarey2004}),
the results of which were formulated mathematically for cases involving
both stationary (i.e., SPI) and the non-stationary (with reference
life tables) populations (see subsections on pp126-128 in \cite{MullerCarey2004}).
Because of the importance to basic ecology and particularly to medical
entomology where the older arthropod vectors (e.g., mosquitoes) have
the highest likelihood of disease transmission, a great deal of effort
has been invested in developing various technologies for estimating
the age of individual insects including physiological \cite{Detinova},
biochemical \cite{Gerade} and genetic \cite{cook,cook2008,Carey2012,Anonimous}
methods. 

The analytical evolution of this SPI continued with the publication
of its proof, first as a demographic relationship between the life
lived (LL) and life remaining (LR) \cite{Vaupel} and then as a theorem
and generalization \cite{RaoCarey2015}. A major mathematical breakthrough
came in stationary population literature, when Rao and Carey \cite{RaoCarey2015}
stated a new theorem using original ideas of stationary population
principles (Carey - Rao Theorem on stationary population identity)
through constructing arguments based on graphs and set-theoretic principles
and based on two criteria that they stated on \textquoteleft LL\textquoteright{}
and \textquoteleft LR\textquoteright . The general concepts of the
life table identify and its extension as an applied model (i.e., integration
of reference life table information) have been used to estimate age
structure and thus to gain insights into population aging in wild
populations including effects of truncation studies \cite{Rao-Carey2019HOST},
of fruit flies \cite{Carey2008,Carey2012b}, butterflies \cite{Molleman},
and mosquitoes \cite{Papa}. In light of the theoretical and analytical
properties of SPI and its use as a foundation for developing models
for estimating age structure in real-world insect populations, we
believe that continuing to explore the mathematical properties of
this identity has the potential to make new and original contributions
to the demographic literature. Thus for the non-stationary and non-life
table populations, the role of SPI needs thorough investigation. 

\begin{table}
\[
\begin{array}{cc}
\underline{\text{Item}} & \underline{\text{Definition}}\\
\\
\\
\text{SPI } & \text{\text{Two sets \ensuremath{X} and }}Y\text{ such that}X=Y,\text{where }\\
\text{(Stationary Population Identity)} & \text{\ensuremath{X} is the set of elements representing the }\\
 & \text{ proportions of populationsat each age of a }\\
 & \text{ stationary population at time \ensuremath{t} and }Y\text{ is the set}\\
 & \text{ of elements representing the remaining number of}\\
 & \text{ years (or remaining time units) left to live at each age.}\\
\\
\text{} & \text{ SPI does not hold}\text{ When }X\neq Y\text{ in the above definition,}\\
 & \text{we say that SPI does not hold.}\\
\\
\text{Knot} & \text{Joining of two simultaneous sub-intervals of time}\\
 & \text{within [\ensuremath{t_{0}},\ensuremath{t_{\omega}}), where in one sub-interval SPI holds, }\\
 & \text{and other sub-interval SPI does not hold}\\
 & \text{ }\\
 & \text{}\\
\text{Predominantly } & \text{}\text{If a population during a time interval [\ensuremath{t_{0}},\ensuremath{t_{\omega}})}\\
\text{a Stationary Population} & \text{ satisfies the inequality}\\
\text{} & [t_{0},t_{1})\cup\stackrel[M,i=1]{\infty}{\bigcup}[\delta_{i},t_{i+1})>\stackrel[N,i=1]{\infty}{\bigcup}[t_{i},\delta_{i})\\
\text{} & \text{where SPI holds within }\\
 & M=\left\{ [t_{0},t_{1}),[\delta_{1},t_{2}),[\delta_{2},t_{3}),\cdots\right\} \\
 & \text{and SPI does not holds within}\\
 & N=\left\{ [t_{1},\delta_{1}),[t_{2},\delta_{2}),\cdots\right\} \\
 & \text{and intervals in }M\text{ and }N\text{ form partitions}\\
 & \text{of }[t_{0},t_{\omega}).\\
\\
\text{Oscillatory SPI} & \text{SPI is oscillatory on the set \ensuremath{M}}\\
 & \text{ with uniform amplitude if }\\
 & \left\{ t_{1}=\frac{t_{0}+\delta_{1}}{2},t_{2}=\frac{\delta_{1}+\delta_{2}}{2},t_{3}=\frac{\delta_{2}+\delta_{3}}{2},...\right\} \\
\\
\\
\end{array}
\]

\caption{\label{tab:Key-Definitions-Used}Key Definitions Used in the Paper}

\end{table}

\section{\textbf{Stationary and Non-Stationary populations}}

While exploring the deeper insights of SPI, we realized that this
property can be helpful in knotting (read joining together) the concepts
of stationary and non-stationary populations such that these two populations
are formed on mutually-exclusive time intervals. A knot here we mean,
joining of two simultaneous sub-intervals of time within $[t_{0},t_{\omega})$,
where in one sub-interval SPI holds, and other sub-interval SPI does
not hold. The main advantages of such a theoretical visualization
of side-by-side occurrence of stationary and non- stationary populations
are to keep our framework of SPI as flexible as possible such that
realistic population dynamics are captured with respect to deviation
from stationarity. Mathematically, these mutually exclusive concepts
allow us to cut with knots the continuous interval on which we study
simultaneous occurrences of these two types of populations. Our constructions
in this article show that SPI property generates these knots on the
continuous interval. Demarcation lines on an interval between stationary
and non- stationary populations can then be visualized as dynamic.
These demarcations (or boundary) lines led us to a novel concept within
the SPI which we term \emph{Oscillatory SPI} (O-SPI). In this case,
the knots indicate the beginning of either stationary or non-stationary
populations and allow us to introduce another term that we refer to
on a continuous interval as the \emph{amplitude of the SPI}. 

For a predominantly stationary population (see Table \ref{tab:Key-Definitions-Used}
and the Definition \ref{def: predominant Stationary}) during an interval
$[t_{0},t_{\omega})$, we can imagine that there exists a disjoint
covering of intervals (a sub-collection of intervals, say $M$, in
which SPI is true and other sub-collection of intervals, say, $N$,
in which SPI is not true), such that, 

\[
\left(\bigcup_{C\in M}C\right)\cup\left(\bigcup_{C'\in N}C'\right)
\]
equals $[t_{0},t_{\omega})$. The interval $[t_{0},t_{\omega})$ is
visualized as a the union of two partitions, one which form SPI and
other does not. See \cite{Davis-book,Kelly-book} for concepts related
to disjoint covering. The partition which form the identity is associated
with stationary population and other one is associated non-stationary
populations. If $``-"$ indicates the minus symbol, then, the SPI
is true in $[t_{0},t_{\omega})-\bigcup_{C'\in N}C'$ and not true
in $[t_{0},t_{\omega})-\bigcup_{C\in M}C.$ The value $[t_{0},t_{\omega})-\bigcup_{C'\in N}C'$
indicates the interval $[t_{0},t_{\omega})$ minus the intervals $\bigcup_{C'\in N}C',$i.e.,
if an element $x$ belongs $[t_{0},t_{\omega})-\bigcup_{C'\in N}C',$
then $x$ belongs to $[t_{0},t_{\omega})$ but $x$ does not to the
union of intervals $\bigcup_{C'\in N}C'.$ Similarly, the meaning
of $[t_{0},t_{\omega})-\bigcup_{C\in M}C$ can be interpreted. We
develop an idea which we call \emph{uniform amplitude }of SPI when
equality such as (\ref{eq:equality-amplitude}) is true

\begin{equation}
[t_{0},t_{\omega})-\bigcup_{\boldsymbol{C}\in M}C=[t_{0},t_{\omega})-\bigcup_{C'\in N}C',\label{eq:equality-amplitude}
\end{equation}
and together $C=C'$ holds for each simultaneous $C\in M$ and $C'\in N.$
However, we develop these ideas on finite sets. Later we will see
that the set $T$ in (\ref{set T}) 

\begin{equation}
T=\left\{ [t_{0},t_{1}),[t_{1},\delta_{1}),[\delta_{1},t_{2}),\cdots[t_{k},\delta_{k-1}),[\delta_{k},t_{k+1}]\right\} \label{set T}
\end{equation}
is a partition of $[t_{0},t_{k+1}]$, where $[t_{0},t_{k+1}]\subset[t_{0},t_{\omega}),$
such that each element of $[t_{0},t_{k+1}]$ lie in exactly in one
interval in (\ref{set T}). We will also see in the Appendix that
the set $\left\{ I,J\right\} $ for the two intervals $I$, $J\subset[t_{0},t_{k+1}]$
as a partition of $[t_{0},t_{k+1}]$. Inasmuch as SPI connects these
two properties in stationary populations, it follows that connecting
them in non-stationary populations is a logical next step.

Let $\Omega$ be the size of the captive cohort such that $\Omega$
is an infinite subset or a very large finite subset of non-negative
integers. Let $c_{i}$ be the age at capture and $d_{i}$ be the age
at death of $i^{th}$ individual, where $d_{i}>c_{i}$ for each $i\in\Omega$.
Here, $d_{i}-c_{i}$ is the follow-up length or post-capture LL by
$i^{th}$ individual. 
\begin{thm}
\label{thm1}If a population is stationary then the SPI holds, but
when\textbf{ $f_{1}(a)=f_{2}(a)$ }does not hold for\textbf{ }every
age $``a"$ in a population then that population could be partitioned
into stationary and non-stationary components\textbf{.}
\end{thm}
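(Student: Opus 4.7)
The plan is to treat the two directions of the theorem separately. For the forward implication (stationary $\Rightarrow$ SPI), I would invoke the classical life-table derivation. In a stationary population with survivorship $\ell(a)$ and life expectancy $L=\int_{0}^{\omega}\ell(s)\,ds$, the age density is $f_{1}(a)=\ell(a)/L$ because zero growth and constant age structure force the age pyramid to be proportional to the survival function. A symmetric counting argument for the remaining-lifetime distribution (an individual aged $c$ with residual lifetime $a$ is in one-to-one correspondence with an individual aged $a$, after flipping the time axis, precisely because births and deaths balance at every age) yields $f_{2}(a)=\ell(a)/L$. Hence $f_{1}(a)=f_{2}(a)$, so $X=Y$ in the sense of Table \ref{tab:Key-Definitions-Used}, establishing SPI.

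For the converse, I would argue constructively. Suppose $f_{1}(a)\neq f_{2}(a)$ fails at some ages, and set
\[
A=\{a\in[0,\omega):f_{1}(a)=f_{2}(a)\},\qquad B=[0,\omega)\setminus A.
\]
Then $\{A,B\}$ partitions the age axis, and the induced sub-populations $P_{A}$ and $P_{B}$ partition the total population. By construction, $f_{1}$ and $f_{2}$ coincide pointwise on $A$, so the restriction of $X$ and $Y$ to $P_{A}$ satisfies $X\!\restriction_{A}=Y\!\restriction_{A}$; this identifies $P_{A}$ as the stationary component. On $B$, the identity fails at every age by construction, so $P_{B}$ cannot be stationary and is identified as the non-stationary component. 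The decomposition
\[
P=P_{A}\sqcup P_{B}
\]
is the required partition.

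The delicate step, and the one I expect to be the main obstacle, is upgrading the pointwise equality $f_{1}(a)=f_{2}(a)$ on $A$ to genuine stationarity of $P_{A}$ in the dynamical sense (zero growth rate and constant age structure), rather than the purely set-theoretic sense $X=Y$ offered by Table \ref{tab:Key-Definitions-Used}. To bridge this, I would appeal to the Carey--Rao theorem established in \cite{RaoCarey2015}, whose converse half asserts exactly that SPI on an age set entails stationarity of the sub-population restricted to those ages; this is precisely what is needed to label $P_{A}$ as a \emph{stationary} component rather than merely an SPI-satisfying one. A secondary subtlety is measurability of $A$, which I would handle by assuming that $f_{1}$ and $f_{2}$ are measurable (as they are in the life-table setting), so that $A$ is a Borel set and the partition is well-defined.
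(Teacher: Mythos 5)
Your proposal is sound in substance, but it reaches the first claim by a genuinely different route and the second by essentially the paper's route with one extra (and well-placed) caveat. For the forward direction the paper does not use the classical life-table identity $f_1(a)=\ell(a)/L=f_2(a)$; instead it works inside the captive-cohort framework of \cite{RaoCarey2015}: the remaining lifetimes $s_i=d_i-c_i$ are sorted in decreasing order, matched (after a finite permutation) with the decreasingly ordered capture ages so that the two ordered sets coincide, the construction is then extended to ties in capture age, and the argument culminates in the equality of the average age at capture with the average remaining post-capture life. Your renewal-type argument is more classical and self-contained; the paper's approach buys compatibility with its captive-cohort setting and with the Carey--Rao machinery the rest of the article builds on. For the converse, your partition of the age axis into $A=\{a:\,f_1(a)=f_2(a)\}$ and its complement is exactly the paper's construction in discrete form (its vectors $V_1$ and $V_2$), and the paper, like you, simply labels the two induced sub-populations stationary and non-stationary, adding only the remark that on the $V_2$ part the average life lived differs from the average life left. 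The ``delicate step'' you flag --- upgrading pointwise equality of the total-population fractions on $A$ to dynamical stationarity of the sub-population $P_A$, whose own age and remaining-life fractions are re-normalized and are not simply restrictions of $f_1,f_2$ --- is not addressed in the paper either; your appeal to the converse half of \cite{RaoCarey2015} is a reasonable patch, though that theorem concerns a whole (captive-cohort) population rather than age-restricted sub-populations, so strictly speaking the same gap survives in both your write-up and the paper's own proof.
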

\begin{proof}
Idea: To prove the first part, we need to prove that if the population
from which the captive cohort drawn is stationary then that follows
SPI. For the second part, we first assume that$f_{1}(a)=f_{2}(a)$
is\textbf{ }not true for every age\textbf{ $``a"$, }and then we try
to prove that the captive cohort $\Omega$ formed could be\textbf{
}partitioned into stationary population and non-stationary population\textbf{
}components\textbf{. }

We assume a very large number of individuals are captured at all possible
ages (need not be integer valued) and no two individuals have same
age at capture. We also assume that: (1) there will be a distinct
value of duration of LR (i.e., remaining life to be lived after capture)
corresponding to the each captured individual; and (2) one\textbf{
}of the values of the remaining LR is identical to exactly one of
the values of the age at capture. Let $\mu(\overline{c})$ and $\mu(\overline{d-c})$
be the average age at capture and average age of remaining length
of post-capture life for the individuals in $J$, respectively, and
$c_{1}\neq c_{2}\neq...\neq c_{k}\neq...$ and $d_{1}-c_{1}\neq d_{2}-c_{2}\neq...\neq d_{k}-c_{k}\neq...$.,
then we have 
\begin{equation}
\mu(\overline{c})=\frac{\Sigma_{i}c_{i}}{\left|\Omega\right|}\mbox{ and }\mu(\overline{d-c})=\frac{\Sigma_{i}(d_{i}-c_{i})}{\left|\Omega\right|}.\label{eq:mu(c)=00003Dmu(d-c)}
\end{equation}
Suppose $S=\left\{ s_{1},s_{2},...\right\} $, where $s_{i}=d_{i}-c_{i}$
for all $i\in\Omega.$ We can arrange elements of the set $S$ in
a decreasing order. To do this, we set $s'_{1}=\max\left\{ s_{1},s_{2},...\right\} .$
Let $S_{1}=S-\left\{ s'_{1}\right\} ,$ where $S_{1}$ is the set
of elements in $S$ after $s'_{1}$ is removed. Let $s'_{2}=\max\left\{ S_{1}\right\} .$
We can continue to obtain maximum values, such that $S_{i+1}=S_{i}-\left\{ s'_{i+1}\right\} ,$
where $s'_{i+1}=\max\left\{ S_{i}\right\} $ for $i=1,2,\cdots$.
Let $T=\left\{ \left(1,s'_{1}\right),\left(2,s'_{2}\right),\cdots\right\} .$
The graph drawn through the co-ordinates of $T$ is a decreasing function.
These kind of constructions for the information of LR after capture
was originally used in \cite{RaoCarey2015}. When $s'_{i}$ is equal
to the corresponding individual's age at capture for all $i\in J$
then the distribution of captured age is equal to the distribution
of duration of the LR after capture. When $s'_{j}$ is not equal to
the corresponding individual's age at capture for all $j\in\Omega_{1}$
for $\Omega_{1}\subset\Omega$, and $s'_{i}$ is equal to the corresponding
individual's age at capture for all $i\in\Omega$ and $i\notin\Omega_{1}$.
Then with a finite permutations of rearrangement of the elements in
$\Omega_{1}$, we can match the set, $T'=\left\{ s'_{1},s'_{2},\cdots\right\} $
with $C,$ the set of decreasing values of captured ages, such that
$T'=C.$ With this construction explained, for an individual captured
at the\textbf{ }age\textbf{ }$\text{\textquotedblleft\ensuremath{a}\textquotedblright}$
in $C$ (i.e., $a$ is an element in $C$) the value of the (element
in $T'$) is exactly\textbf{ }$\text{\textquotedblleft\ensuremath{a}\textquotedblright}$
which is the remaining LR. 

Suppose there are one or more than one individual of the same age
at the time of capture. $\Omega$ is now sum of partitions of individuals,
where each partition represents number of individuals who are captured
at the same age. Let $c_{p}^{q}$ be the $q^{th}$ individual captured
aged $p$ and $s_{p}^{q}$ be the remaining LR for the $q^{th}$ individual
who was captured at age $p$ for $p>0$ and $q=1,2,\cdots,n_{p}$
$\left(n_{p}\in\mathbb{Z}^{+}\right).$ We assume that for each of
the $s_{p}^{q}$ there is a corresponding value $c_{p}^{q}$ which
could be within the same age $p$ or in other captured age. That is,
if 
\[
C=\left\{ c_{p}^{q}:\mbox{ }c_{p}^{q},p,\in\mathbb{R}^{+},q\in\mathbb{Z}^{+}\mbox{ and }1\leq q\leq n_{p}\right\} 
\]
 and

\[
S=\left\{ s_{p}^{q}:\mbox{ }s_{p}^{q},p\in\mathbb{R}^{+},q\in\mathbb{Z}^{+}\mbox{ and }1\leq q\leq n_{p}\right\} ,
\]
then for each $y\in S$ there is a corresponding element $x\in C.$
The following property is assured:

\begin{equation}
\frac{\int_{0}^{\infty}\left(\sum_{q=0}^{n_{p}}c_{p}^{q}\right)dp}{\left|J\right|}=\frac{\int_{0}^{\infty}\left(\sum_{q=0}^{n_{p}}s_{p}^{q}\right)dp}{\left|J\right|}.\label{eq:Carey's Equality}
\end{equation}

For more details on the type of logic and arguments provided above,
see the Carey-Rao Theorem and proof \textbf{\cite{RaoCarey2015}},
which introduced these set of arguments. Conversely, suppose for a
finite population, let\textbf{ $f_{1}(a)=f_{2}(a)$ }for the ages
$a_{1},a_{2},\cdots,a_{k}$ (without any order) and\textbf{ $f_{1}(a)\neq f_{2}(a)$}
for ages $a_{k+1},a_{k+2},\cdots,a_{n}$ (without any order) and no
two individuals are of same age. This implies, there will be two vectors\textbf{
$V_{1}$ }and $V_{2}$ based on the rule that SPI is true or not,
which are given by,

\[
V_{1}=\left[\begin{array}{c}
f_{1}(a_{1})=f_{2}(a_{1})\\
f_{1}(a_{2})=f_{2}(a_{2})\\
\vdots\\
f_{1}(a_{k})=f_{2}(a_{k})
\end{array}\right]\mbox{ and }\begin{array}{ccc}
V_{2} & = & \left[\begin{array}{c}
f_{1}(a_{k+1})\neq f_{2}(a_{k+1})\\
f_{1}(a_{k+2})\neq f_{2}(a_{k+2})\\
\vdots\\
f_{1}(a_{n})\neq f_{2}(a_{n})
\end{array}\right].\end{array}
\]

The sub-population corresponding to $V_{1}$ forms stationary population
and the sub-population corresponding to $V_{2}$ forms non-stationary
population.\textbf{ }Due to $V_{2}$ the average value of the LL by
$a_{1},a_{2},\cdots,a_{n}$ are not equal to their average remaining
value, which will lead population to be non-stationary. 
\end{proof}
\begin{thm}
\label{thm2}Suppose SPI holds for a population, then a) SPI also
holds for all stationary sub-populations of the original population
and b) SPI need not hold for all non-stationary sub-populations of
the original population. 
\end{thm}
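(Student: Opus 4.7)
The plan is to treat parts (a) and (b) separately, each as a short application of Theorem \ref{thm1}.

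For part (a), I would invoke the first implication of Theorem \ref{thm1} directly. A stationary sub-population is, by definition, a population with zero growth rate and constant age structure, so Theorem \ref{thm1} immediately yields that its proportion-at-age set $X'$ equals its remaining-years set $Y'$. Concretely, if the sub-population is drawn from the captive cohort by restricting the index set to some $\Omega' \subseteq \Omega$, then the averages in (\ref{eq:mu(c)=00003Dmu(d-c)}) and the integral identity (\ref{eq:Carey's Equality}) continue to hold with $\Omega$ replaced by $\Omega'$, because the bijective matching between captured ages $c_{p}^{q}$ and remaining lives $s_{p}^{q}$ constructed in the proof of Theorem \ref{thm1} is preserved under sub-cohort restriction. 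The hypothesis that SPI holds for the parent population plays only a background role; what is really used is the stationarity of the sub-population itself.

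For part (b), the plan is to exhibit an explicit counter-example. Start with the stationary population depicted in Figure \ref{Stationary-population-color} and let $P'$ be the standalone sub-population of individuals aged in $[70, 90]$. Viewed in isolation, $P'$ is non-stationary: no new individuals enter at age $70$, and the entire sub-cohort dies out within $20$ years. Its age-proportion set $X'$ is supported in $[70, 90]$, whereas its remaining-years set $Y'$ is supported in $[0, 20]$. Because these supports are disjoint, $X' \neq Y'$ and SPI fails for $P'$, which suffices for the phrase ``need not hold'' in the statement.

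The main obstacle is conceptual rather than technical: making precise what is meant by a non-stationary sub-population, since the SPI framework is built on a captive cohort whose LL and LR values are attributes of individuals rather than of a standalone population. Once the sub-population is fixed as a subset of the captive cohort that inherits each individual's age at capture and remaining life, the comparison of $X'$ and $Y'$ reduces to a bookkeeping argument about supports. The substance of part (b) is therefore the observation that age-window restrictions break the bijective matching between captured ages and remaining-life durations that made SPI possible in the parent population.
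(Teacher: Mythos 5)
Your part (a) is essentially the paper's own argument: the paper partitions $P$ into disjoint stationary pieces $P_{1},\dots,P_{k}$ and non-stationary pieces $Q_{1},\dots,Q_{l}$ and simply observes that, since each $P_{i}$ is itself stationary, SPI holds within each $P_{i}$ by the stationary-implies-SPI direction of Theorem \ref{thm1}; like you, it uses only the stationarity of the sub-population, with the parent's SPI playing no real role. (Your added claim that the averages and the integral identity from the proof of Theorem \ref{thm1} survive with $\Omega$ replaced by an arbitrary $\Omega'\subseteq\Omega$ because the bijective matching is ``preserved under sub-cohort restriction'' is neither needed nor true in general; what regenerates the matching is precisely the assumed stationarity of the sub-population, which is all the paper invokes.) For part (b) you take a genuinely different route: the paper argues generically that an arbitrarily chosen $Q_{j}$ could have a high intrinsic growth rate and a young age structure, so the proportion at age $a$ need not equal the proportion with $a$ years remaining, whereas you exhibit a concrete counterexample, the $[70,90]$ age slice viewed as a closed cohort with no entrants, whose age support $[70,90]$ is disjoint from its remaining-life support, forcing $X'\neq Y'$. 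This is a legitimate and arguably more explicit way to establish ``need not hold,'' but two cautions are in order: the paper's Example 1 calls this very slice a \emph{stationary} subpopulation (embedded in the ongoing population its size and age structure are constant), so your argument only works under the closed-cohort, no-recruitment reading that you state, and you should say explicitly that this is what makes the slice non-stationary; and the isolated cohort does not die out within $20$ years but within roughly $40$ (a $70$-year-old can survive to about $110$), although the supports remain disjoint, so your conclusion is unaffected.
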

\begin{proof}
a) Let $P$ be a stationary population and let $P_{1},P_{2},...,P_{k}$
are disjoint stationary sub-populations and $Q_{1},Q_{2},...,Q_{l}$
are disjoint non-stationary sub-populations of $P$ such that

\begin{equation}
\bigcup_{i=1}^{k}P_{i}\cup\bigcup_{j=1}^{l}Q_{j}=P.\label{eq:PuQ=00003DP}
\end{equation}
If $P_{i}$ for each $i$ is a stationary population, then SPI holds
within the each $P_{i}.$ 

b) Suppose we partition population into a disjoint collection of stationary
and non-stationary sub-populations as in (a), then SPI need not hold
in an arbitrarily chosen $Q_{j}$, because, for an arbitrarily chosen
$Q_{j}$ the intrinsic growth rates could be very high and the population
could be younger such that the proportion of population at age $a$
years not equal to the proportion of population who have $a$ years
remaining. 
\end{proof}
\begin{figure}
\includegraphics[scale=0.7]{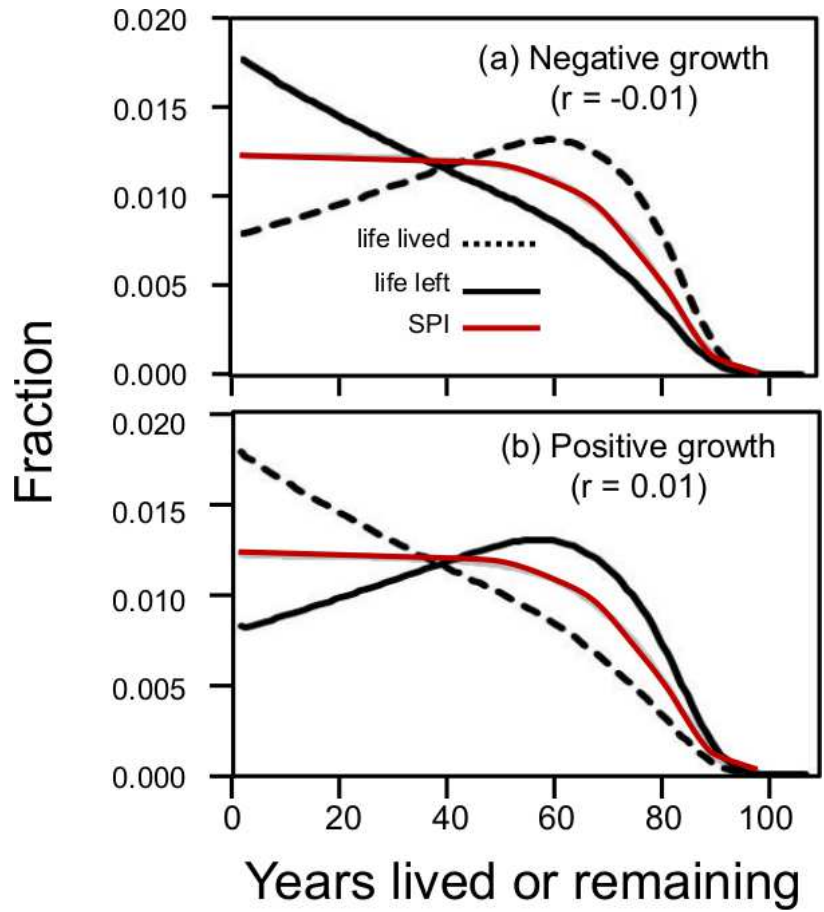}

\caption{\label{fig2}Relationship of the fraction of individuals who have
lived x years relative to the fraction who have x years left to live
for populations with either negative (a) or positive (b) growth rates.
For reference the red curves shows the equivalency of LL and left
for stationary populations (zero growth). }

\end{figure}

\begin{figure}
\includegraphics[scale=0.5]{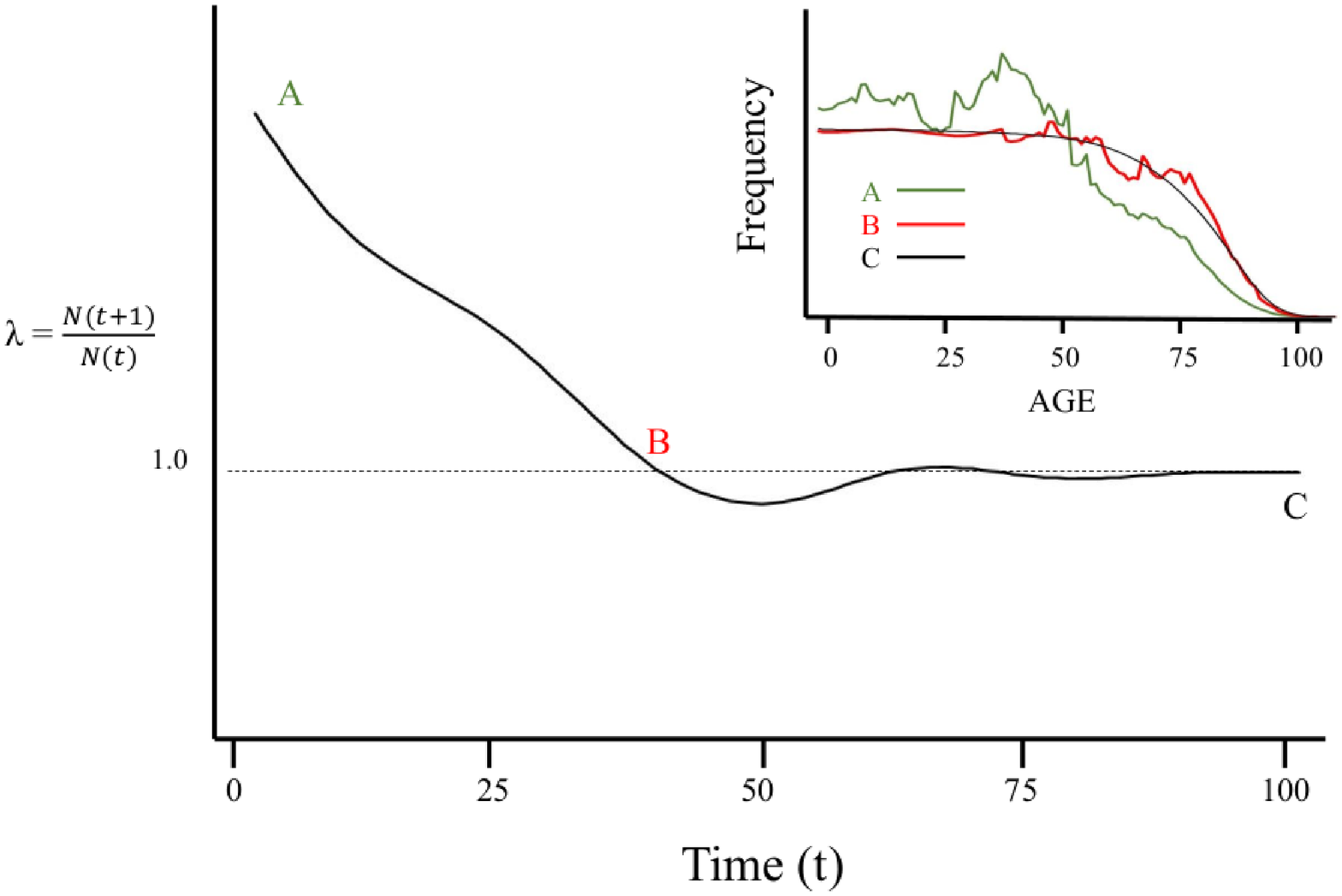}

\caption{\label{fig3}Trajectory of growth rate and age structure in a hypothetical
population starting from a positive rate and converging to zero growth
(stationarity). The initial age structure was based on the U.S. population
in 2000, the fertility rates on a standard age-specific fertility
schedule in humans scaled to replacement-level (=1.0) and survivorship
on the female rates in 2006 {[}7{]}. N(t+1)/N(t) is the ratio of number
in the population at time t+1 to the number at time and is denoted
\textgreek{l}. Frequency in the inset refers to the age distribution
of the population. Points labeled A, B, and C correspond to the starting
growth rates, the point at which growth rate first reaches replacement
level (i.e., transient stationarity), and the point at which growth
rate is constant at zero (i.e., fixed stationarity), respectively.
Replacement levels of growth required approximately 40 years from
the start (i.e., A-to-B) and another 60 years to become fixed (i.e.,
B-to-C). Note the small oscillations around stationarity after B as
the age structure converges to C.}

\end{figure}

\section{\textbf{Oscillatory stationary population identity and It's Amplitude}}

Suppose that the population remains stationary for an infinitely long
period of time, except in shorter time intervals in between due to
perturbations. After small perturbations in the population due to
vital events population deviates temporarily to a non-stationary state
for a brief time-period before restoring stationary properties. When
the population is stationary then we know that SPI holds (see for
example \cite{RaoCarey2015}). SPI holds here we mean that it is true
for all ages, i.e., the proportion of the population who are at age
$a$\emph{ units }is same as the proportion of the population who
have $a$\emph{ units }remaining for each age $a.$ Suppose the population
remains stationary during the interval $\left[0,t\right]$ and let
there be a vital event during the interval $(t,t+\delta)$ for a positive
$\delta$ which is very very small. Then during $(t,t+\delta)$ SPI
(in a strict sense) is not true and SPI remains not true until population
remains strictly non-stationary (say until $\delta_{1}$ for $\delta_{1}>\delta$).
As soon as stationarity is restored SPI will be true again until the
next vital event. There will be finite or infinite \emph{cycles} of
\emph{stationary to non-stationary to stationary populations} and
hence SPI is true intermittently. 
\begin{defn}
\label{def: predominant Stationary}We define a population as a \emph{predominantly
a stationary population} if 

\begin{equation}
[t_{0},t_{1})\cup\stackrel[M,i=1]{\infty}{\bigcup}[\delta_{i},t_{i+1})>\stackrel[N,i=1]{\infty}{\bigcup}[t_{i},\delta_{i}),\label{eq: predominant stationary}
\end{equation}

where SPI holds for the disjoint collection of intervals $M=\left\{ [t_{0},t_{1}),[\delta_{1},t_{2}),[\delta_{2},t_{3}),\cdots\right\} ,$
and SPI does not hold for the disjoint collection of intervals $N=\left\{ [t_{1},\delta_{1}),[t_{2},\delta_{2}),\cdots\right\} .$
The L.H.S. of (\ref{eq: predominant stationary}) is the union over
$i$ for the time intervals corresponding to the set $M$ and the
R.H.S. of (\ref{eq: predominant stationary}) is the union over $i$
values for the time intervals corresponding to the set\textbf{ $N.$}
If 

\begin{equation}
[t_{0},t_{1})\cup\stackrel[M,i=1]{\infty}{\bigcup}[\delta_{i},t_{i+1})<\stackrel[N,i=1]{\infty}{\bigcup}[t_{i},\delta_{i}),\label{eq:predom non-stat}
\end{equation}

then we define a population as \emph{predominantly a non-stationary}.
A population is neither \emph{predominantly a stationary population
nor} \emph{predominantly a non-stationary if}\textbf{\emph{ }}

\begin{equation}
[t_{0},t_{1})\cup\stackrel[M,i=1]{\infty}{\bigcup}[\delta_{i},t_{i+1})=\stackrel[N,i=1]{\infty}{\bigcup}[t_{i},\delta_{i}).\label{eq:define =00003D case}
\end{equation}
\end{defn}
We define oscillatory property of SPI as follows:
\begin{defn}
We define a criteria that the SPI is \emph{oscillatory} on $M$ with
\emph{uniform amplitude} whenever the following statement is true:
\end{defn}
\begin{equation}
\left\{ t_{1}=\frac{t_{0}+\delta_{1}}{2},t_{2}=\frac{\delta_{1}+\delta_{2}}{2},t_{3}=\frac{\delta_{2}+\delta_{3}}{2},...\right\} .\label{eq:statement}
\end{equation}

\begin{thm}
\textbf{\label{conj:Oscillatory-Carey's-Equality.}} For a predominantly
stationary population defined in the Definition {[}\ref{def: predominant Stationary}{]},
SPI exists except for shorter intermittent intervals when population
is non-stationary. 
\end{thm}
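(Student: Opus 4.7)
The plan is to read the conclusion as a direct consequence of Definition~\ref{def: predominant Stationary} combined with Theorem~\ref{thm1}. First I would unpack the hypothesis. By assumption the time line $[t_{0},t_{\omega})$ admits the disjoint decomposition $M\cup N$, where $M=\{[t_{0},t_{1}),[\delta_{1},t_{2}),[\delta_{2},t_{3}),\ldots\}$ lists the sub-intervals on which SPI is declared to hold, $N=\{[t_{1},\delta_{1}),[t_{2},\delta_{2}),\ldots\}$ lists the sub-intervals on which it fails, and the measure comparison (\ref{eq: predominant stationary}) is in force. Two things have to be verified: (i) SPI genuinely ``exists'', i.e.\ is realised on a large sub-collection of $[t_{0},t_{\omega})$; and (ii) the non-SPI portion consists of shorter and intermittent sub-intervals.

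For (i) I would observe that every interval in $M$ is by construction a time sub-interval on which the population is stationary, and then apply the first half of Theorem~\ref{thm1} on each such interval to conclude that $f_{1}(a)=f_{2}(a)$ holds throughout $[t_{0},t_{1})\cup\bigcup_{i\geq 1}[\delta_{i},t_{i+1})$. For (ii) the alternating index scheme $t_{0}<t_{1}<\delta_{1}<t_{2}<\delta_{2}<\cdots$ built into the definition forces every $[t_{i},\delta_{i})\in N$ to be sandwiched between two members of $M$, so the $N$-intervals appear as isolated insertions between consecutive stationary regimes --- this is precisely the ``intermittent'' condition. The ``shorter'' clause is then simply the defining inequality (\ref{eq: predominant stationary}), read as a comparison of cumulative length: the total measure of the non-SPI intervals is strictly less than that of the SPI intervals.

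The main obstacle is interpretive rather than computational. The theorem is nearly a linguistic re-packaging of Definition~\ref{def: predominant Stationary}, so the delicate point is pinning down what ``shorter intermittent'' means. My plan is to read ``intermittent'' as ``interspersed between consecutive stationary intervals'' (which is immediate from the ordering of the $t_{i}$ and $\delta_{i}$) and ``shorter'' in the cumulative/measure sense supplied by (\ref{eq: predominant stationary}); I would explicitly flag that this does not forbid an individual $[t_{i},\delta_{i})$ from being locally longer than a neighbouring $M$-interval, only that the totals are ordered. Once these interpretations are fixed, no further calculation is required: the conclusion follows by applying Theorem~\ref{thm1} on each $M$-interval and invoking (\ref{eq: predominant stationary}) globally to control the $N$-intervals.
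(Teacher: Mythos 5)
Your proposal is correct and follows essentially the same route as the paper's own proof: the paper likewise constructs the alternating sequence of stationary intervals (where SPI holds, by the stationary-implies-SPI direction) and non-stationary intervals, and then invokes the predominance inequality (\ref{eq: predominant stationary}), written there as the ratio $\mathcal{S}_{0}>1$, to encode the ``shorter intermittent'' clause in the cumulative sense. Your explicit remark that ``shorter'' is a comparison of total lengths rather than an interval-by-interval bound is a reasonable clarification of exactly the reading the paper uses.
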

\begin{proof}
Let the population be stationary during $[t_{0},t_{1})$ and a small
perturbation (vital event(s)) takes place at $t_{1}$ such that the
population deviates from stationary properties. Suppose there is a
vital event(s) (at the time $\delta_{1}$ for some $\delta_{1}>t_{1}$),
which balances deviated stationary population back to stationary mode.
Suppose at time $t_{2}$ for $t_{2}>\delta_{1}$ the population again
deviates from stationary mode due to vital event(s) and gets restored
at time $\delta_{2}$ for $\delta_{2}>t_{2}$ such that the population
remains non-stationary in the interval $[t_{2},\delta_{2}).$ Suppose
this cycle of stationary population to non-stationary and back to
stationary population continues to repeat at different time points
$t$ and $\delta.$ SPI holds for the disjoint collection of intervals,
$M$ and does not hold for the disjoint collection of intervals, $N.$
Because the population is predominantly stationary, the following
inequality holds
\begin{eqnarray}
\frac{[t_{0},t_{1})\cup\stackrel[M,i=1]{\infty}{\bigcup}[\delta_{i},t_{i+1})}{\stackrel[N,i=1]{\infty}{\bigcup}[t_{i},\delta_{i})} & > & 1.\label{eq:ineq1}\\
\nonumber 
\end{eqnarray}
\end{proof}
Let us denote $\mathcal{S}_{0}=\frac{[t_{0},t_{1})\cup\stackrel[M,i=1]{\infty}{\bigcup}[\delta_{i},t_{i+1})}{\stackrel[N,i=1]{\infty}{\bigcup}[t_{i},\delta_{i})}.$ 

Whenever $\left[\Sigma_{i=1}^{\infty}(\delta_{i}-t_{i})\mbox{ for }\delta_{i},t_{i}\in N\right]$
$>\left[(t_{1}-t_{0})+\Sigma_{i=1}^{\infty}(t_{i+1}-\delta_{i})\mbox{ for }\delta_{i},t_{i}\in M\right]$,
then $\mathcal{S}_{0}<1$. 

We call this property of holding and not holding SPI over disjoint
intervals constructed in the proof of the Theorem \ref{conj:Oscillatory-Carey's-Equality.}
as the \emph{O-SPI}. We associate the idea of amplitude with the length
of time when SPI holds. The \emph{amplitudes} of SPI are defined here
as the lengths of the intervals of the set $M.$ 
\begin{thm}
\label{thm:6}Given a finite time set-up of disjoint intervals of
$M$ and $N$ up to $[\delta_{k},t_{k+1})\in M.$ If SPI is \emph{oscillatory}
on $M$ \emph{with uniform amplitude,} \emph{then} $\mathcal{S}_{0}=1$
but converse need not be true.
\end{thm}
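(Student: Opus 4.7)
The plan is to recast $\mathcal{S}_0 = 1$ as an equality of total lengths of the disjoint unions $\bigcup_{C\in M} C$ and $\bigcup_{C'\in N} C'$, and then derive that equality term by term from the uniform-amplitude hypothesis (\ref{eq:statement}). First I would replace the set-theoretic ratio defining $\mathcal{S}_0$ by the corresponding ratio of Lebesgue measures, which is the only interpretation consistent with the inequality (\ref{eq:ineq1}) in the proof of Theorem \ref{conj:Oscillatory-Carey's-Equality.}. Under this reading, $\mathcal{S}_0 = 1$ is equivalent to a numerical identity of the form
\[
(t_1 - t_0) + \sum_{i=1}^{k-1}(t_{i+1} - \delta_i) \;=\; \sum_{i=1}^{k}(\delta_i - t_i),
\]
with the sums appropriately truncated by the finite endpoint of the setup.

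Next, I would unpack (\ref{eq:statement}) by introducing the convention $\delta_0 := t_0$. Each relation $t_i = (\delta_{i-1}+\delta_i)/2$ then rearranges to
\[
t_i - \delta_{i-1} \;=\; \delta_i - t_i,
\]
which says that every stationary interval $[\delta_{i-1},t_i)\in M$ has the same length as the non-stationary interval $[t_i,\delta_i)\in N$ that immediately follows it. This furnishes a bijective pairing of the intervals of $M$ with those of $N$; summing the pairwise equalities over all indices in the finite setup yields precisely the displayed identity, and hence $\mathcal{S}_0 = 1$.

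For the converse I would exhibit an explicit small configuration in which the sums agree but the midpoint identities fail. Taking $k=2$ and $(t_0,t_1,\delta_1,t_2,\delta_2)=(0,2,3,5,8)$, the stationary total is $(2-0)+(5-3)=4$ and the non-stationary total is $(3-2)+(8-5)=4$, so $\mathcal{S}_0=1$; yet $t_1=2\neq 3/2=(t_0+\delta_1)/2$, so the very first midpoint identity of (\ref{eq:statement}) already fails. This shows that $\mathcal{S}_0=1$ does not force oscillation with uniform amplitude, establishing the second half of the claim.

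The main obstacle is purely a bookkeeping one: the phrasing ``up to $[\delta_k, t_{k+1}) \in M$'' literally places one more stationary interval in $M$ than non-stationary intervals in $N$, which would leave an unpaired interval and force $\mathcal{S}_0 > 1$. I would resolve this by interpreting the truncation so that the last included pair of $M$- and $N$-intervals is linked by one of the uniform-amplitude equations at $t_k$ (so the trailing stationary block $[\delta_k,t_{k+1})$ is understood to close the finite setup rather than stand as an additional unpaired term). Once this pairing is fixed, the summation step is essentially immediate and the explicit counterexample completes the converse direction.
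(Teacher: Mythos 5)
Your argument is correct in substance, and its first half coincides with the paper's own proof: the paper also reads $\mathcal{S}_0$ as a ratio of total interval lengths and deduces from (\ref{eq:statement}) that $\Sigma_i(\delta_i-t_i)$ equals $(t_1-t_0)+\Sigma_i(t_{i+1}-\delta_i)$, which is exactly your termwise pairing $t_i-\delta_{i-1}=\delta_i-t_i$ (with $\delta_0:=t_0$) summed over the setup. Where you genuinely diverge is the converse. The paper gives no numerical example; it truncates to $k=1$, with $M_{t_2}=\left\{ [t_0,t_1),[\delta_1,t_2)\right\}$ and $N_{t_2}=\left\{ [t_1,\delta_1)\right\}$, solves $\mathcal{S}_0=1$ to get $t_1=\frac{t_0+2\delta_1-t_2}{2}$, and observes this differs from $\frac{t_0+\delta_1}{2}$ because $t_2>\delta_1$; so in that truncation $\mathcal{S}_0=1$ is incompatible with the first midpoint identity for every admissible configuration, a stronger (and, given the forward direction, somewhat self-undermining) statement than the mere failure of the converse. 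Your single explicit configuration $(0,2,3,5,8)$ does the required logical job, but note it ends with an $N$-interval and so does not literally match the stated truncation \emph{up to $[\delta_k,t_{k+1})\in M$}; if you want to respect that format, take e.g. $(t_0,t_1,\delta_1,t_2,\delta_2,t_3)=(0,1,2,3,5,6)$, where both totals equal $3$ yet $t_2=3\neq\frac{\delta_1+\delta_2}{2}=3.5$. Finally, the bookkeeping obstacle you flag is real and is present in the paper itself: its forward computation is written with infinite sums while the theorem and its converse live in the finite truncation, where uniform amplitude literally forces $\mathcal{S}_0>1$ by the length of the unpaired final block $[\delta_k,t_{k+1})$; your explicit convention for how that trailing block closes the setup is a reasonable, and more candid, repair than the paper's silent switch between the finite and infinite readings.
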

\begin{proof}
When SPI is oscillatory on $M$ with uniform amplitude then the statement
(\ref{eq:statement}) is true. Hence we can see that

$\left[\Sigma_{i=1}^{\infty}(\delta_{i}-t_{i})\mbox{ for }\delta_{i},t_{i}\in N\right]$
$=\left[(t_{1}-t_{0})+\Sigma_{i=1}^{\infty}(t_{i+1}-\delta_{i})\mbox{ for }\delta_{i},t_{i}\in M\right]$,
which implies $\mathcal{S}_{0}=1.$

Conversely, suppose $\mathcal{S}_{0}=1.$ Let us consider events up
to time $t_{k+1}$ in the interval $[t_{0},\infty)$. Let $M_{t_{k+1}}$and
$N_{t_{k+1}}$ be the sub-collection of intervals of $M$ and $N$,
respectively and are given by,

\begin{eqnarray}
M_{t_{k+1}} & = & \left\{ [t_{0},t_{1}),[\delta_{1},t_{2}),[\delta_{2},t_{3}),\cdots,[\delta_{k},t_{k+1})\right\} ,\nonumber \\
N_{t_{k+1}} & = & \left\{ [t_{1},\delta_{1}),[t_{2},\delta_{2}),\cdots,[t_{k},\delta_{k})\right\} .\label{eq:truncated collection}
\end{eqnarray}

For $k=1,$ we have,

\begin{eqnarray}
M_{t_{2}} & = & \left\{ [t_{0},t_{1}),[\delta_{1},t_{2})\right\} ,\nonumber \\
N_{t_{2}} & = & \left\{ [t_{1},\delta_{1})\right\} .\label{eq:truncated up to t2}
\end{eqnarray}

Since $\mathcal{S}_{0}=1,$ from (\ref{eq:truncated up to t2}), we
will have $t_{1}=\frac{t_{0}+2\delta_{1}-t_{2}}{2}.$ Since $t_{2}>\delta_{1}$,
we have

\begin{equation}
\frac{t_{0}+2\delta_{1}-t_{2}}{2}\neq\frac{t_{0}+\delta_{0}}{2}.\label{eq:NEQ}
\end{equation}

The inequality (\ref{eq:NEQ}) indicates there is no uniform amplitude.
\end{proof}
\begin{cor}
\emph{For a predominantly non-stationary population SPI may hold even
in small intermittent intervals.}
\end{cor}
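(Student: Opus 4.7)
The plan is to deduce this corollary directly from the definition of a predominantly non-stationary population, mirroring the construction used in the proof of Theorem \ref{conj:Oscillatory-Carey's-Equality.} but with the dominance between $M$ and $N$ reversed. The inequality (\ref{eq:predom non-stat}) characterizing predominant non-stationarity constrains only the total length of the stationary sub-collection $M$ relative to $N$; it does not force $M$ to be empty. Hence the very format of the definition already permits a non-trivial collection $M = \{[t_0,t_1), [\delta_1,t_2), [\delta_2,t_3), \ldots\}$ of short intervals on which SPI can hold.

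First, I would set up the same alternating sequence of breakpoints $t_0 < t_1 < \delta_1 < t_2 < \delta_2 < \cdots$ used in Theorem \ref{conj:Oscillatory-Carey's-Equality.}, but tune the spacings so that the intervals $[t_i,\delta_i) \in N$ are on average strictly longer than the intervening $[\delta_i, t_{i+1}) \in M$. Biologically, the picture is a population subject to frequent or prolonged perturbations by vital events, but which occasionally passes through brief windows during which a balancing sequence of vital events momentarily restores stationarity before the next perturbation destabilizes it.

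Next, on each such window $[\delta_i, t_{i+1}) \in M$, the population is stationary by construction; therefore the forward implication of Theorem \ref{thm1} applies and yields $f_1(a) = f_2(a)$ for every age $a$. This is precisely the statement that SPI holds on these intermittent intervals, establishing the ``may hold'' part of the corollary.

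Finally, I would verify that the constructed sequence does satisfy (\ref{eq:predom non-stat}), confirming that the population is indeed predominantly non-stationary while SPI nevertheless holds on the short intervals of $M$. I do not anticipate any real obstacle here, since the corollary is essentially a definitional observation combined with the forward direction of Theorem \ref{thm1}; the only care needed is to keep the total measure of $M$ strictly below that of $N$, which is a matter of choosing the $t_i, \delta_i$ appropriately rather than of any analytical difficulty.
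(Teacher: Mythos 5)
Your proposal is correct and matches the paper's own reasoning: the paper states this corollary without a separate proof, treating it as immediate from the same alternating construction of intervals $M$ and $N$ used for Theorem \ref{conj:Oscillatory-Carey's-Equality.}, only with the dominance reversed (the case $\mathcal{S}_{0}<1$ discussed just before the corollary), which is exactly the construction you describe. Your additional appeal to the forward direction of Theorem \ref{thm1} to confirm SPI on the intervals of $M$ is consistent with the paper and introduces no gap.
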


\section{\textbf{Connection between main theorems, graphical results and applications}}

Several new properties and implications of SPI were proved in this
article through various Theorems {[}Theorem \ref{thm1}, Theorem \ref{thm2},
Theorem \ref{conj:Oscillatory-Carey's-Equality.}, Theorem \ref{thm:6},
Theorem \ref{thm9} and Theorem \ref{theorem:inequality}{]} and Lemma
\ref{lemma in Appendix}. These theorems which take implications of
SPI to different directions (for example, Theorem \ref{thm1}, Theorem
\ref{thm2}, Theorem \ref{conj:Oscillatory-Carey's-Equality.}, Theorem
\ref{thm:6}) show newer avenues of the interface of SPI between stationary
and non-stationary populations and O-SPI. Constructions of captive
cohorts and logic of matching the duration of LL and LR developed
in Carey - Rao Theorem on SPI \cite{RaoCarey2015} helped us to prove
arguments in Theorem \ref{thm1}. This theorem implies that when the
fraction of the population at age $a$ is not equal to the fraction
of the population whose remaining years to live is $a$ for some age
$a$, then the population could be either stationary or non-stationary.
For a stationary population shown in Figure \ref{Stationary-population-color},
these fractions are equal at all ages $a$ or at all age groups if
group-wise fractions are considered. When these fractions are not
equal for each age $a$ then the population is non-stationary. Therefore,
Theorem \ref{thm1} helped us investigate properties of the SPI that
interface stationary and non-stationary populations. 

An example of the relationship between the percentage of a population
that have lived x years and the percentage of persons who have x years
remaining in a stationary population is visualized in Figure \ref{Stationary-population-color}
for a hypothetical population based on the 2006 U.S. female life table.
This graphic shows that there are 14.8\% of this stationary population
that are aged 70 to 90 years old and, at the same time, there are
14.8\% of the population who have from 70 to 90 years remaining. These
percentages for remaining years are distributed among the younger
individuals from newborn (x=0), most of whom have from 70 t 90 years
remaining to a miniscule percentage of 40 year-olds who will live
another 70 years (i.e., to super-centenarians, age 110). Theorem \ref{thm2}
is the first step towards specifying the behavior of SPI on partitions
of stationary and non-stationary sub-populations of the total population.
This implies, when partitioning of the total population is done into
a collection of stationary sub-populations then the aforesaid fractions
remain equal in each of the sub-population. Theorem \ref{thm2} also
implies that when a population is partitioned into sub-populations,
these fractions need not be equal if stationary principles are not
preserved (see Figure \ref{fig2}). 

The relationship of the fraction of individuals in a population that
have lived x years (i.e., the age distribution) relative to the fraction
of individuals in the same population that have x years remaining
is shown in Figure 2 for two hypothetical populations, one with a
negative (r=-0.01) growth rate (Fig. 2a) and another with a positive
(r=0.01) growth rate (Fig. 2b). Each is shown relative to the stationary
(r=0.00) case. Several aspects of this figure merit comment. First,
note the equivalency of the fraction of the population that have lived
x years and the fraction that have x years to live as shown in the
curves for the stationary population. In other words, the life-lived/life-left
curves are superimposed. In contrast the trajectories for the life-lived
and the life-left curves for populations with either negative or positive
growth rates are separate as shown by the departure of the dashed
and solid black lines in each graph. Note in the top graph for a population
with the negative growth rate the fraction of the population that
are young (e.g., 0 to 20 years is low relative the fraction that are
old (e.g., 60 to 80 years). In other words, the age structure of decreasing
populations is skewed to the older age classes. However, because population
is older, the fraction of individuals with fewer number of years to
live (e.g., \textless 20) is higher relative to the fraction of persons
who have many years left to live (e.g., \textgreater 60 more years
remaining). The exact opposite relationship of LL to LR is evident
in a population a positive growth rate as shown in the bottom graph
of Figure \ref{fig2}. That is, the skew towards the young in a growing
population results in a skew toward the fraction of individual (i.e.,
the young) who have many years remaining.

Through Theorem \ref{conj:Oscillatory-Carey's-Equality.}, we have
shown for a stationary population how SPI could hold in alternate
time intervals. Within the construction of oscillatory properties,
we have introduced amplitudes of SPI which provides the lengths of
time intervals for which the SPI holds. We have introduced the idea
of O-SPI which over the time will have practical applications in understanding
population dynamics through switching of stationary and non-stationary
populations (See both Figure \ref{fig3} and next paragraph). The
concept of \emph{transient stationarity} is visualized in Figure \ref{fig3}
for a population converging from a positive growth rate to a fixed
(replacement-level) stationary state. This figure shows the change
in growth rate, $\lambda$, in the main graphic and the age structure
of the population that corresponds to three different points (A, B,
and C) along this growth trajectory. The age structure (inset) at
$t=0$ corresponds to a growing population with a bi-modal distribution,
one mode from birth to age 25 and the other from 25 to 50 years. This
corresponds to point A in the main graphic (rapid growth rate). At
around $t=40$ the population growth rate, \textgreek{l}, had decreased
to zero (Point B). However, this was a transient condition because
the age structure (shown in the inset) was not yet stable. This \textquotedblleft transiency\textquotedblright{}
in growth rate and age structure continued until both had converged
to fixed stationarity, a state corresponding to point C in the main
graphic showing changes in $\lambda$ and in the inset showing the
age distribution. Connections between the properties of O-SPI and
countries or populations with net reproduction rates (NRR) around
the value one ($1.0$) can be investigated using the properties proved
in this article. When the intrinsic growth rates are highly dynamic
in the populations then achieving the net reproductive rates around
the value one may not stay for a longer period, and the duration of
the time for which the status of \textasciiacute NRR = $1.0$\textasciiacute{}
in the population might be very short lived. Implications of \textasciiacute NRR
= $1.0$\textasciiacute{} and properties of O-SPI across several populations
can be studied to understand long-range population dynamics. 

Since every human population has an underlying life table, every human
population can form the basis of a model stationary population \cite{Preston2011}.
Therefore it follows that understanding the deeper properties of stationary
populations as described here and elsewhere \cite{Ryder1973,MullerCarey2004,RaoCarey2015}
will add important depth to population theory more generally. Second,
understanding the oscillatory behavior of populations as they approach
stationarity is important inasmuch as this behavior is tightly linked
to the concept of population momentum\textemdash the continuation
of growth after a population has achieved replacement-level fertility
\cite{keyfitz1971,Schoen2003,Rao-Notices}. Momentum and population
aging are essentially two aspects of the same phenomenon \cite{KimSchoenSarma},
and momentum is likely to be responsible for most of the future growth
in the world\textquoteright s population \cite{Bongaarts1999,Cohen1995}.
Therefore, a deeper understanding of the underlying dynamics of population
stationarity, momentum, and convergence and concepts concerned with
the demographic transition will strengthen the foundations for the
development of sound population policy including family planning,
aging, and social security.

\section{\textbf{Discussion}}

The number of years different individuals have lived in a population,
as well as the number of years these individuals have left, are universal
properties of all populations. Whereas the first is a static characteristic
of populations inasmuch as it specifies age structure, the second
is a dynamic concept since it designates the future population\textquoteright s
actuarial properties. This second property is more complex than the
first inasmuch as it describes distributions within a distribution
i.e., the allocation of individual deaths within each of the 100+
age groups of the age distribution. Both of these population characteristics
are important in both basic and applied demographic contexts. The
first property is concerned with the relationship of different population
age groups (e.g. dependency ratios; population aging) and the second
is concerned with future deaths (e.g. how many deaths will occur in
the next $1,2$, or $5$ years). Since the age structure of a population
must logically be connected to its future death distribution, the
implicit qualitative relationship between life-years lived and life-
years is both obvious and intuitive. However, the explicit quantitative
relationship between life-lived and left was neither obvious nor intuitive
prior to the discovery of the SPI. Because of the importance of linking
the actuarial properties of populations with their age structure as
SPI does, it follows that exploring this identity in greater mathematical
depth has the potential to provide important new insights into these
linkages in two mathematical contexts. The first is within stationary
populations as we did with the three main properties (Theorems \ref{thm1}
to \ref{conj:Oscillatory-Carey's-Equality.}), and the second context
is between stationary and non-stationary populations as we did with
what we refer to as O-SPI. We still feel the beauty of SPI in population
dynamics is under explored, and the results presented here can be
seen as a step towards a larger goal of understanding non-stationary
populations through such lens. 

\section*{\textbf{Acknowledgments}}

We thank three referees for their critical review of the concepts
introduced and for their several useful structural comments which
helped us to thoroughly revise our original submission. Research supported
in part through the UC Berkeley CEDA grant P30AG0128 to JRC.

\section*{\textbf{Appendix}}

Let,

\begin{eqnarray*}
I & = & [t_{0},t_{1})\bigcup\stackrel[M,i=1]{k}{\bigcup}[\delta_{i},t_{i+1}),\\
J & = & \stackrel[N,i=1]{k}{\bigcup}[t_{i},\delta_{i}).
\end{eqnarray*}
and let $\dot{I}$ and $\dot{J}$ be the partitions of $I$ and $J$,
which are written as,

\textbf{
\[
\dot{I}=\left\{ \left(I_{M}(t_{i})\right):i=1,2,\cdots,k+1\right\} =\left\{ \left(I_{M}(t_{i})\right)\right\} _{i=1}^{k}
\]
} and

\textbf{
\[
\dot{J}=\left\{ \left(J_{N}(t_{i})\right):i=1,2,\cdots,k\right\} =\left\{ \left(J_{N}(t_{i})\right)\right\} _{i=1}^{k},
\]
}where $I_{M}(t_{1})=[t_{0},t_{1}),$ $I_{M}(t_{i})=[\delta_{i},t_{i+1})$
for $i=2,\cdots,k$ and $J_{N}(t_{i})=[t_{i},\delta_{i})$ for $i=1,2,\cdots,k.$
Since $I_{M}(t_{i})$ and $J_{N}(t_{i})$ are non-degenerate intervals,
the lengths of $I_{M}(t_{i})$ and $J_{N}(t_{i})$ are always positive.
Hence, $\text{max}I_{M}(t_{i})$, $\min I_{M}(t_{i})$, $\max J_{N}(t_{i})$,
and $\min J_{N}(t_{i})$ exists. Let $f(a,t_{i})$ be the function
specifying the proportion of individuals at age $a\in A$ during $I_{M}(t_{i})$
for $f(a,t_{i}):I_{M}(t_{i})\rightarrow\mathbb{R}^{+}$ and $A$ be
the set of all ages in the population. Since SPI holds in $I_{M}(t_{i})$,
we have 

\begin{eqnarray}
Prob\left[f(a,t_{i})=g(a,t_{i})\forall a,t_{i}\right] & = & 1\mbox{ if }f(a,t_{i}):I_{M}(t_{i})\rightarrow\mathbb{R}^{+}\nonumber \\
 & = & 0\mbox{ otherwise,}\label{eq:CE Prob}
\end{eqnarray}
where $g(a,t_{i})$ is the function specifying remaining LR at age
$a$ during $I_{M}(t_{i}).$
\begin{lem}
\label{lemma in Appendix}Suppose $\triangle f(a,t_{i})=\hat{f}(a,t_{i})-\check{f}(a,t_{i})$
for $i=1,2,3,...,k$, where $\hat{f}(a,t_{i})=\max_{a}f(a,t_{i})$
and $\check{f}(a,t_{i})=\min_{a}f(a,t_{i})$, then $\triangle f(a,t_{i})$
is bounded for each $I_{M}(t_{i}).$ 
\end{lem}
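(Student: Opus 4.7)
The plan is to exploit the fact that $f(a,t_i)$ is a proportion of a population, so it must take values in the unit interval, and the difference of its extremes over the age set $A$ is therefore automatically constrained. The whole argument is essentially a reading off of the probabilistic meaning of $f$ combined with the existence of the extrema that the preceding paragraph already establishes.

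First I would record the basic a priori bound: since by construction $f(a,t_i)$ is the proportion of individuals of age $a$ in the population during the interval $I_M(t_i)$, one has $0\le f(a,t_i)\le 1$ for every $a\in A$ and every $t_i$, with total mass $\sum_{a\in A} f(a,t_i)=1$ (or the integral analogue). Next I would invoke the existence of $\hat f(a,t_i)=\max_a f(a,t_i)$ and $\check f(a,t_i)=\min_a f(a,t_i)$, which is already asserted in the paragraph just before the lemma from the non-degeneracy of $I_M(t_i)$. Combining the two observations gives
\[
0 \;\le\; \check f(a,t_i) \;\le\; \hat f(a,t_i) \;\le\; 1,
\]
and hence
\[
0 \;\le\; \triangle f(a,t_i) \;=\; \hat f(a,t_i)-\check f(a,t_i) \;\le\; 1
\]
for every $i=1,2,\dots,k$. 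Because the bound does not depend on $i$, the quantity $\triangle f(a,t_i)$ is bounded on each $I_M(t_i)$ (and in fact uniformly in $i$), which is exactly what the lemma asserts.

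The only real subtlety, and the step I would treat most carefully, is ensuring that the maxima and minima are genuinely attained rather than being merely supremum/infimum values: this is what justifies writing the bound as a concrete number instead of an estimate. The paper handles this implicitly by imposing non-degeneracy of the intervals $I_M(t_i)$, together with the standing life-table setting in which $f(\cdot,t_i)$ is continuous on a compact age domain, so I would simply flag that hypothesis and move on. If a sharper bound were desired, one could invoke the fact that SPI holds on $I_M(t_i)$ (equation \eqref{eq:CE Prob}), forcing $f(\cdot,t_i)=g(\cdot,t_i)$ and further restricting the shape of the age distribution, but such refinement is not needed for the boundedness claim.
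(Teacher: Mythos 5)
Your proof is correct, but it follows a different route from the paper's. You bound $\triangle f(a,t_{i})$ directly from the normalization of $f$: since $f(a,t_{i})$ is a proportion, $0\le\check{f}(a,t_{i})\le\hat{f}(a,t_{i})\le 1$, hence $0\le\triangle f(a,t_{i})\le 1$ uniformly in $i$; this uses nothing beyond the attainment of the extrema and the fact that proportions live in the unit interval (a normalization the paper itself uses later, e.g.\ in the identity $\sum_{i}\int_{0}^{\infty}f(a,t_{i})\,da=k$). The paper instead invokes the SPI statement (\ref{eq:CE Prob}) on $I_{M}(t_{i})$: since $f=g$ there, the extrema $\hat{g}(a,t_{i})$ and $\check{g}(a,t_{i})$ of the life-remaining function exist, and the proof concludes with the inequality $\triangle f(a,t_{i})<\hat{g}(a,t_{i})+\check{g}(a,t_{i})$, i.e.\ a difference of nonnegative quantities is less than their sum. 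The paper's bound is therefore $i$-dependent and tied thematically to the life-remaining distribution $g$, whereas yours is the sharper and more robust statement: it gives a uniform bound of $1$, does not require SPI to hold on $I_{M}(t_{i})$ at all, and isolates the only genuine hypothesis needed (that the max and min are attained), which you correctly flag. Either argument establishes the lemma; yours is the more elementary and more general of the two.
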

\begin{proof}
If there are at least two age groups in $A$, then $\hat{f}(a,t_{i})$
and $\check{f}(a,t_{i})$ exists within $I_{M}(t_{i})$ and they are
distinct. Suppose there are only two age groups in $A$, then (\ref{eq:CE Prob})
guarantees that there exist $\hat{g}(a,t_{i})$ and $\check{g}(a,t_{i})$
for $\hat{g}(a,t_{i})=\max_{a}g(a,t_{i})$ and $\check{g}(a,t_{i})=\min_{a}g(a,t_{i})$.
This implies, $\triangle f(a,t_{i})<\hat{g}(a,t_{i})+\check{g}(a,t_{i}).$
This inequality follows even if there are more than two age groups
in $A$, hence $\triangle f(a,t_{i})$ is bounded. 
\end{proof}
\begin{thm}
\label{thm9}$\frac{1}{\Sigma\triangle f(a,t_{i})}$ is bounded on
$\left[t_{0},t_{k+1}\right).$
\end{thm}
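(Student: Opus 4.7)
The plan is to show that the denominator $\Sigma\triangle f(a,t_i)$ is strictly positive and finite, whence its reciprocal is bounded above. First I would exploit the fact that the interval $[t_{0},t_{k+1})$ contains only finitely many stationary pieces of the partition built in the Appendix, namely $I_{M}(t_{1})=[t_{0},t_{1})$ together with $I_{M}(t_{i})=[\delta_{i},t_{i+1})$ for $i=2,\ldots,k+1$. Consequently $\Sigma\triangle f(a,t_i)$ is a finite sum of $k+1$ terms, which sidesteps any convergence question and reduces the task to two estimates, one on each side.

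Second, I would apply Lemma \ref{lemma in Appendix} directly to each summand. The lemma gives the two-sided control $0<\triangle f(a,t_{i})<\hat{g}(a,t_{i})+\check{g}(a,t_{i})$, and since every value of $f$ and $g$ is a proportion lying in $[0,1]$, each summand is finite. Summing over $i=1,\ldots,k+1$ yields a finite upper bound $\Lambda$ on the denominator. This part is immediate once the Lemma is invoked, and guarantees the aggregate expression is well-defined.

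Third, and this is the essential step, I would establish a strictly positive lower bound on the denominator. In any population that spans at least two age classes with distinct proportions, we have $\hat{f}(a,t_{i})>\check{f}(a,t_{i})$ strictly, so each $\triangle f(a,t_{i})>0$. Because the index set is finite, the minimum $\lambda:=\min_{1\leq i\leq k+1}\triangle f(a,t_{i})$ is attained and is strictly positive, so
\begin{equation*}
\Sigma\triangle f(a,t_{i})\geq(k+1)\lambda>0,
\end{equation*}
which yields
\begin{equation*}
\frac{1}{\Sigma\triangle f(a,t_{i})}\leq\frac{1}{(k+1)\lambda}.
\end{equation*}
Since this bound does not depend on the specific time point chosen in $[t_{0},t_{k+1})$, the reciprocal is bounded on the interval as claimed.

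The main obstacle I anticipate is securing a \emph{uniform} positive lower bound across all the stationary pieces. Pointwise positivity of each $\triangle f(a,t_{i})$ follows from the non-degeneracy of the age distribution, but a priori some $\check{f}(a,t_{i})$ could approach $\hat{f}(a,t_{i})$ on a particular $I_{M}(t_{i})$. Finiteness of $k$ saves us here, because the minimum over finitely many strictly positive numbers is strictly positive. If one wished to extend the conclusion from $[t_{0},t_{k+1})$ to the full half-line $[t_{0},\infty)$, one would need the amplitude control supplied by Theorem \ref{thm:6} together with an explicit hypothesis $\inf_{i}\triangle f(a,t_{i})>0$; I would flag this as the natural route to a non-truncated version of the result.
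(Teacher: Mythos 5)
Your proposal is correct and follows essentially the same route as the paper: strict positivity of each $\triangle f(a,t_{i})$ together with Lemma \ref{lemma in Appendix} applied over the finitely many stationary pieces $I_{M}(t_{i})$. Your third step, which extracts a strictly positive lower bound on $\Sigma\triangle f(a,t_{i})$ from the finiteness of the index set, simply makes explicit the point that the paper's one-line proof leaves implicit, since the Lemma's upper bound on $\triangle f(a,t_{i})$ by itself would not control the reciprocal.
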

\begin{proof}
Since $\triangle f(a,t_{i})>0$ and $\triangle f(a,t_{i})$ is bounded
on $I_{M}(t_{i})$ by the Lemma (\ref{lemma in Appendix}), the result
follows. 
\end{proof}
Suppose $\hat{f}(a,t_{i})$ is concentrated around mean age of the
population and $\check{f}(a,t_{i})$ is concentrated around the very
old age of the population, then $\triangle f(a,t_{i})$ is an increasing
function indicates one or more of the following three situations;
i) longevity of the population is increasing without much change in
the mean age, ii) mean age is reducing without reducing in longevity,
iii) mean age is reducing and simultaneously longevity is increasing. 
\begin{thm}
\label{theorem:inequality}Suppose the partitions $\dot{I}$ and $\dot{J}$
are given, then $1+\frac{k^{2}}{\hat{f}(a,t_{i})+\check{f}(a,t_{i})}>k\left(\frac{1}{\hat{f}(a,t_{i})}+\frac{1}{\check{f}(a,t_{i})}\right)$.
\end{thm}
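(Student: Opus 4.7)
The plan is to clear denominators, rewrite the claim as a polynomial inequality in $u = \hat{f}(a,t_i)$, $v = \check{f}(a,t_i)$ and $k$, and then close it by combining AM--GM with the quantitative control on $\triangle f(a,t_i)$ provided by Lemma \ref{lemma in Appendix}. Since $f(a,t_i): I_M(t_i) \to \mathbb{R}^{+}$, both $u$ and $v$ are strictly positive, and the proof of Lemma \ref{lemma in Appendix} shows that $u \neq v$ as soon as the age set $A$ contains at least two age groups. Hence $(u+v)\,uv > 0$, and multiplying the target inequality through by this positive quantity yields the equivalent form
$$(u+v)\,uv + k^{2}\,uv \;>\; k(u+v)^{2}.$$
Writing $s = u+v$ and $\triangle = u-v$, the algebraic identity $4uv = s^{2} - \triangle^{2}$ rewrites this as
$$s^{2}\bigl(s + k^{2} - 4k\bigr) \;>\; \triangle^{2}\bigl(s + k^{2}\bigr),$$
so the entire proof reduces to one quantitative bound on $\triangle = \hat f - \check f$ in terms of $s$ and $k$.

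The central step is to produce the bound $\triangle^{2} < s^{2}(s + k^{2} - 4k)/(s + k^{2})$. In the generic regime $k \geq 4$, the coefficient $s + k^{2} - 4k$ is strictly positive and dominates; the trivial AM--GM observation $\triangle^{2} \leq s^{2}$ can then be sharpened using the boundedness of $\triangle f(a,t_i)$ from Lemma \ref{lemma in Appendix} together with $u, v \in \mathbb{R}^{+}$ to close the required gap by a one-line algebraic comparison. In this regime the argument is essentially: AM--GM, substitute, rearrange.

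The hard part is the small-$k$ regime $k \in \{1,2,3\}$, where $s + k^{2} - 4k$ may be negative and the naive AM--GM chain fails outright. I would handle these finitely many cases by falling back on the explicit truncated collections written out in (\ref{eq:truncated collection}) together with the predominantly stationary structure of Definition \ref{def: predominant Stationary}, reading off a lower bound on $p = uv$ directly from the partition data rather than from a universal inequality. This step is the real obstacle, because it requires showing that $\hat f \cdot \check f$ does not collapse to zero in any partition admitted by the hypothesis of the theorem — a constraint that must be extracted from the oscillatory structure of $M$ and $N$ rather than from a general convexity estimate, and that in my view is the genuine content of the statement.
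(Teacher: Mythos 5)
Your algebraic reduction is fine as far as it goes: multiplying by $(u+v)uv$ and substituting $4uv=s^{2}-\triangle^{2}$ does turn the displayed inequality into $s^{2}(s+k^{2}-4k)>\triangle^{2}(s+k^{2})$. But the proof stops exactly where it would have to start. For $k\geq 4$ you assert that AM--GM ``sharpened'' by Lemma \ref{lemma in Appendix} closes the gap; Lemma \ref{lemma in Appendix} only says $\triangle f(a,t_{i})<\hat{g}(a,t_{i})+\check{g}(a,t_{i})$, which gives no control of $\triangle$ relative to $s$, and the bound you need is simply not true when $\check{f}(a,t_{i})$ is small: nothing in the hypotheses (the partitions $\dot{I},\dot{J}$, Definition \ref{def: predominant Stationary}, or the truncated collections (\ref{eq:truncated collection})) prevents the minimum age-proportion from being arbitrarily close to $0$, and e.g.\ $k=4$, $\hat{f}=0.99$, $\check{f}=0.01$ violates $s^{2}(s+k^{2}-4k)>\triangle^{2}(s+k^{2})$. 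The same problem kills the fallback for $k\in\{1,2,3\}$: there is no lower bound on $\hat{f}\check{f}$ to ``read off from the partition data.'' In fact the statement with $\hat{f}+\check{f}$ in the denominator is false as written (take $k=1$, $\hat{f}=0.9$, $\check{f}=0.1$: it claims $2>11.1$), so no completion of your small-$k$ analysis can succeed.

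What the paper actually proves, and what the displayed formula is evidently meant to say, is the version with the product $\hat{f}(a,t_{i})\,\check{f}(a,t_{i})$ in the denominator, and its proof uses the one piece of structure you never invoke: the normalization of $f$. Since $f(a,t_{i})$ is a proportion over ages, $\int_{0}^{\infty}f(a,t_{i})\,da=1$ on each $I_{M}(t_{i})$, so the sum over the $k$ intervals equals $k$; hence both $\check{f}(a,t_{i})-k$ and $\hat{f}(a,t_{i})-k$ are negative, giving $\bigl(\check{f}(a,t_{i})-k\bigr)\bigl(\hat{f}(a,t_{i})-k\bigr)>0$ as in (\ref{eq:products}), and expanding and dividing by the positive quantity $\hat{f}(a,t_{i})\check{f}(a,t_{i})$ yields $1+\frac{k^{2}}{\hat{f}(a,t_{i})\check{f}(a,t_{i})}>k\bigl(\frac{1}{\hat{f}(a,t_{i})}+\frac{1}{\check{f}(a,t_{i})}\bigr)$. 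That is a one-line argument with no case split on $k$ and no bound on $\triangle f$. So the gap in your proposal is twofold: you attempt the strictly stronger (and false) sum-in-denominator inequality, and you miss the normalization $\sum_{i}\int_{0}^{\infty}f(a,t_{i})\,da=k$, which is the only substantive input the theorem needs.
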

\begin{proof}
Consider the expression

\begin{equation}
\left(\check{f}(a,t_{i})-\sum_{i=1\mbox{ for }t_{i}\in I}^{\infty}\int_{0}^{\infty}f(a,t_{i})da\right)\left(\hat{f}(a,t_{i})-\sum_{i=1\mbox{ for }t_{i}\in I}^{\infty}\int_{0}^{\infty}f(a,t_{i})da\right).\label{eq:product-integrals}
\end{equation}

Since $\sum_{i=1}^{\infty}\int_{0}^{\infty}f(a,t_{i})da=k$ and both
the terms of the expression (\ref{eq:product-integrals}) are negative,
(\ref{eq:product-integrals}) can be written as 

\begin{equation}
\left(\check{f}(a,t_{i})-k\right)\left(\hat{f}(a,t_{i})-k\right)>0.\label{eq:products}
\end{equation}

Simplifying (\ref{eq:products}) we will obtain desired result. 
\end{proof}
\begin{rem}
\label{remark after Thm}For each $t_{i}$ for $i=1,2,\cdots,k$,
without taking the summations in (\ref{eq:product-integrals}), we
have
\end{rem}
\[
\left(\check{f}(a,t_{i})-f(a,t_{i})\right)\left(\hat{f}(a,t_{i})-f(a,t_{i})\right)\begin{cases}
\begin{array}{cc}
=0 & \mbox{if }\hat{f}(a,t_{i})=f(a,t_{i})\mbox{ or }\check{f}(a,t_{i})=f(a,t_{i})\\
<0 & \mbox{if }\mbox{ }\check{f}(a,t_{i})<f(a,t_{i})<\hat{f}(a,t_{i})
\end{array}\end{cases}
\]

and

\[
\left(\check{f}(a,t_{i})-\int_{0}^{\infty}f(a,t_{i})da\right)\left(\hat{f}(a,t_{i})-\int_{0}^{\infty}f(a,t_{i})da\right)>0.
\]

Let $\varphi(a,t_{i})$ be the function specifying the proportion
of individuals at age $a\in B$ during $J_{N}(t_{i})$ for $\varphi(a,t_{i}):J_{N}(t_{i})\rightarrow\mathbb{R}^{+}$
and $B$ be the set of all ages in the population when SPI does not
hold. Suppose $\hat{\varphi}(a,t_{i})=\max_{a}\varphi(a,t_{i})$ and
$\check{\varphi}(a,t_{i})=\min_{a}\varphi(a,t_{i})$. We note that,
equivalent versions of Theorem \ref{theorem:inequality} and Remark
\ref{remark after Thm} for the age functions $\varphi,$ $\hat{\varphi}(a,t_{i}),$
$\check{\varphi}(a,t_{i})$ still hold. Under the continuous transition
of decreasing population sizes over the interval $[t_{0},t_{k+1}),$
let us assume $\hat{f}(a,t_{1})>\hat{f}(a,t_{2})>\cdots>\hat{f}(a,t_{k+1})$
and $\hat{\varphi}(a,t_{1})>\hat{\varphi}(a,t_{2})>\cdots>\hat{\varphi}(a,t_{k}).$
This implies, $\hat{f}(a,t_{1})>\hat{\varphi}(a,t_{1})>\cdots>\hat{\varphi}(a,t_{k})>\hat{f}(a,t_{k+1})$.
Also, $\int_{0}^{\infty}f(a,t_{1})da-\hat{f}(a,t_{1})<\int_{0}^{\infty}\varphi(a,t_{1})da-\hat{\varphi}(a,t_{1})<\cdots<\int_{0}^{\infty}\varphi(a,t_{k})da-\hat{\varphi}(a,t_{k})<\int_{0}^{\infty}f(a,t_{k+1})da-\hat{f}(a,t_{k+1}),$
and this leads to $1-\hat{f}(a,t_{1})<1-\hat{\varphi}(a,t_{1})<\cdots<1-\hat{\varphi}(a,t_{k})<1-\hat{f}(a,t_{k+1}).$
We can model the dynamics of these maximum and minimum fractions over
the time period using the following logistic growth models with certain
limiting points of these fractions.

\begin{eqnarray}
\frac{d\hat{f}(a,t)}{dt} & = & r_{1}\hat{f}(a,t)\left(1-\frac{\hat{f}(a,t)}{\left(\hat{f}(a,t)\right)_{e}}\right)\label{eq:ode1}\\
\frac{d\hat{\varphi}(a,t)}{dt} & = & r_{2}\hat{\varphi}(a,t)\left(1-\frac{\hat{\varphi}(a,t)}{\left(\hat{\varphi}(a,t)\right)_{e}}\right)\label{eq:ode2}\\
\frac{d\check{f}(a,t)}{dt} & = & r_{3}\check{f}(a,t)\left(1-\frac{\check{f}(a,t)}{\left(\check{f}(a,t)\right)_{e}}\right)\label{eq:ode3}\\
\frac{d\check{\varphi}(a,t_{i})}{dt} & = & r_{4}\check{\varphi}(a,t_{i})\left(1-\frac{\check{\varphi}(a,t_{i})}{\left(\check{\varphi}(a,t_{i})\right)_{e}}\right),\label{eq:ode4}
\end{eqnarray}

where $r_{1},r_{2},r_{3}$ and $r_{4}$ are rates of declines in maximum
and minimum fractions and $\left(\hat{f}(a,t)\right)_{e},$ $\left(\hat{\varphi}(a,t)\right)_{e},$
$\left(\check{f}(a,t)\right)_{e},$ $\left(\check{\varphi}(a,t_{i})\right)_{e}$
are limiting points of the fractions $\hat{f}(a,t),$ $\hat{\varphi}(a,t),$
$\check{f}(a,t),$ $\check{\varphi}(a,t_{i})$, respectively. Further
we provide partial differential equations models by treating $\hat{f}(a,t),$
$\hat{\varphi}(a,t),$ $\check{f}(a,t),$ $\check{\varphi}(a,t_{i})$
as continuous variables. First we consider two pairs of variables
$\left\{ \hat{f}(a,t),\hat{\varphi}(a,t)\right\} $, $\left\{ \check{f}(a,t),\check{\varphi}(a,t_{i})\right\} $
and corresponding dependent variables $u_{1}\left(\hat{f}(a,t),\hat{\varphi}(a,t)\right)$,
$u_{2}\left(\check{f}(a,t),\check{\varphi}(a,t_{i})\right)$ to build
two models (\ref{eq: pde1}) and (\ref{eq: pde2}). These two models
provide dynamics of simultaneous occurrences of stationary and non-stationary
populations. If we want to follow dynamics of $\hat{f}$ and $\hat{\varphi}$
on the time interval $[t_{0},t_{\infty})$ by considering two pairs
of independent variables $\left\{ t,\hat{f}(a,t)\right\} $, $\left\{ t,\hat{\varphi}(a,t)\right\} $
with corresponding dependent variables $v_{1}\left(t,\hat{f}(a,t)\right)$,
$v_{1}\left(t,\hat{\varphi}(a,t)\right)$ then the PDE models we considered
are given in (\ref{eq:pde3}) and (\ref{eq:pde4}). Here $\tau_{1}$
and $\tau_{2}$ are constants, which could indicate speed of the dynamics
of peaks of the maximum fractions. Similarly, dynamics of $\check{f}$
and $\check{\varphi}$ with dependent variables $w_{1}\left(t,\check{f}(a,t)\right)$
and $w_{2}\left(t,\check{\varphi}(a,t_{i})\right)$ are modeled as
per equations given in (\ref{eq:pde5}) and (\ref{eq:pde6}), where
$\tau_{3}$ and $\tau_{4}$ are constants indicate speed with which
these variables move. 

\begin{eqnarray}
\frac{\partial u_{1}\left(\hat{f}(a,t),\hat{\varphi}(a,t)\right)}{\partial\hat{f}} & = & -\hat{\varphi}\frac{\partial u_{1}\left(\hat{f}(a,t),\hat{\varphi}(a,t)\right)}{\partial\hat{\varphi}}\label{eq: pde1}\\
\frac{\partial u_{2}\left(\check{f}(a,t),\check{\varphi}(a,t_{i})\right)}{\partial\check{f}} & = & -\check{\varphi}\frac{\partial u_{2}\left(\check{f}(a,t),\check{\varphi}(a,t_{i})\right)}{\partial\check{\varphi}}\label{eq: pde2}\\
\frac{\partial v_{1}\left(t,\hat{f}(a,t)\right)}{\partial t} & = & -\tau_{1}\frac{\partial v_{1}\left(t,\hat{f}(a,t)\right)}{\partial\hat{f}},\label{eq:pde3}
\end{eqnarray}

\begin{eqnarray}
\frac{\partial v_{2}\left(t,\hat{\varphi}(a,t)\right)}{\partial t} & = & -\tau_{2}\frac{\partial v_{2}\left(t,\hat{\varphi}(a,t)\right)}{\partial\hat{\varphi}}\label{eq:pde4}\\
\frac{\partial w_{1}\left(t,\check{f}(a,t)\right)}{\partial t} & = & -\tau_{3}\frac{\partial w_{1}\left(t,\check{f}(a,t)\right)}{\partial\check{f}}\label{eq:pde5}\\
\frac{\partial w_{2}\left(t,\check{\varphi}(a,t)\right)}{\partial t} & = & -\tau_{4}\frac{\partial w_{4}\left(t,\check{\varphi}(a,t)\right)}{\partial\check{\varphi}}.\label{eq:pde6}
\end{eqnarray}
Diffusion type of equations appear in several situations of modeling
in biology, for example refer to the book \cite{Benoit-book}. Further
applications of diffusion type of equations appear in studying growth
of cell populations, see \cite{Boulanour-EJDE,Lebowitz-jmb,Rotternberg}.

\end{document}